\providecommand{\U}[1]{\protect\rule{.1in}{.1in}}
\newtheorem{theorem}{Theorem}
\newtheorem{proposition}[theorem]{Proposition}
\newenvironment{proof}[1][Proof]{\noindent\textbf{#1.} }{\ \rule{0.5em}{0.5em}}
\def\Rev{\normalfont\textsc{Rev}}
\newcommand\ignore[1]{}
\begin{document}

\title{Existence of Optimal Mechanisms for Selling Multiple Goods: An Elementary
Proof\thanks{First version: June 2024.}}
\author{Sergiu Hart\thanks{%
The Hebrew University of Jerusalem (Federmann Center for the Study of
Rationality, Department of Economics, and Department of Mathematics).\quad 
\emph{E-mail}: \texttt{hart@huji.ac.il} \quad \emph{Web site}: \texttt{%
http://www.ma.huji.ac.il/hart}} \and Noam Nisan\thanks{%
The Hebrew University of Jerusalem (Federmann Center for the Study of
Rationality, and School of Computer Science and Engineering). Supported by a grant from the Israeli Science Foundation (ISF number 505/23). \emph{E-mail}: 
\texttt{noam@cs.huji.ac.il} \quad \emph{Web site}: \texttt{%
http://www.cs.huji.ac.il/\symbol{126}noam}}}

\maketitle

\begin{abstract}
        We provide an elementary proof that revenue-maximizing  mechanisms exist in multi-parameter settings whenever the distribution of valuations has finite expectation. 
\end{abstract}

\section{Introduction}
Consider
the basic setting of a single seller that is selling multiple goods to a single
buyer in a Bayesian setup, where only the probability distribution of the
buyer's valuations is known to the seller. 
What is the optimal 
mechanism that
maximizes the seller's expected revenue from this distribution? In contrast to the
single-good case that was fully solved by \cite{Mye81},
this turns out to be a difficult problem due to the
``multi-parameter'' nature of the problem.
See, e.g., \cite{BCKW10, CHMS10, DW11, Tha04, MM88, MV06, HN12, HN13, DDT13, DDT15, HR15, BGN17}, among many others.

This paper deals with a more preliminary question: do optimal
mechanisms exist at all? (The alternative would be to have 
mechanisms
that can extract higher and higher revenues, but never achieve the maximal limit revenue.) Having such a revenue-maximizing mechanism allows us to simplify various arguments and dispense with constructs that start with 
the annoying
``Let $\varepsilon>0$ and let $\mu$ be a mechanism that 
extracts a revenue of at least $\Rev(X)-\varepsilon$ from $X$.''

The following example shows that an optimal mechanism need not always exist, even in the case of a single good.  Assume that the valuation of the good is given by a random variable $X$ with
$\mathbb{P}\left[X\geq t\right]  =1/(t+1)$ for every $t\geq0$ (i.e., with density $1/(t+1)^{2}$). The
revenue that can be obtained by the fixed price $p$ is thus $p \cdot \mathbb{P}\left[X\geq p\right]=p/(p+1)$, and so, by \cite{Mye81}, the optimal revenue is $\Rev(X)=\sup_{p\geq0} p/(p+1) =1,$ but there is no finite price $p,$ and thus no mechanism (which
is a convex combination of fixed price mechanisms) where the revenue 1 is achieved.\footnote{A discrete version of the example: for every integer $n \ge 0$ let $\mathbb{P}\left[  X\geq n\right]  =1/(n+1),$ i.e.,
$\mathbb{P}\left[  X=n\right]  =1/((n+1)(n+2))$.}

For multiple goods, the elegant but complex duality analysis of
\cite{DDT15} shows that optimal mechanisms exist when the valuations are bounded
and the probability distributions have continuous densities that are differentiable and have bounded derivatives.

In this note we provide a simple elementary proof of existence of optimal mechanisms under the very minimal
condition that the random valuation has \emph{finite expectation}
(i.e., is integrable).  

The proof strategy is what one would expect: showing that a
``limit'' of mechanisms is itself a mechanism.  The question is how to define such a limit 
properly.  Directly looking at the limit of allocations and payments does not seem
to do the trick.  As we will show, what does work is looking at the (pointwise) limit of the \emph{buyer
payoff functions}.

We use the following rather general formalization to state our results.  We denote by $\Gamma \subset \mathbb{R}_+^k$ the set of possible ``allocations'' to the buyer, where 
$\Gamma$ can be any compact (bounded and closed) set of nonnegative $k$-dimensional vectors.\footnote{We write $\mathbb{R}_+$ for $\mathbb{R}_{\ge 0}=\{x:x \ge 0\}$.} A buyer's valuation is 
given by another $k$-dimensional nonnegative vector 
$x \in \mathbb{R}_{+}^k$, 
which yields a real value of
$g \cdot x = \sum_{i=1}^k g_i x_i$ for each possible allocation $g \in \Gamma$.  This formalization
directly models mechanisms for $k$ goods with additive valuation and also with unit demand, and abstract implementation with $k$ choices, both for deterministic mechanisms and for general (randomized) mechanisms; see Table \ref{tab:gamma_choices}.  Most other settings (such as combinatorial 
valuations) are easily reduced to one of these, with $k$ being the appropriate number of parameters (for combinatorial auctions, 
$k$ is exponential in the number of goods).

\begin{table}[h]
    \centering
    \renewcommand{\arraystretch}{1.5} 
    
    \begin{tabularx}{\textwidth}{|l|X|X|} 
        \hline
        \textbf{Setting} & \textbf{Deterministic Mechanisms} & \textbf{Randomized Mechanisms} \\ 
        \hline
        One good & \centering $\Gamma=\{0,1\}$ & \centering $\Gamma=[0,1]$ \tabularnewline 
        \hline
        $k$ goods with additive valuation & \centering $\Gamma=\{0,1\}^k$ & \centering $\Gamma=[0,1]^k$ \tabularnewline 
        \hline
        $k$ goods with unit demand & \centering $\Gamma=\{\mathbf{0},e^{1},...,e^{k}\}$ & \centering $\Gamma=\{g\in\lbrack
0,1]^{k}:\sum_{i}g_{i}\leq 1\}$ \tabularnewline 
        \hline
        Implementation with $k$ options & \centering $\Gamma=\{e^{1},...,e^{k}\}$ & \centering $\Gamma=
        \{g\in\lbrack0,1]^{k}:\sum_{i}g_{i}=1\}$ \tabularnewline 
        \hline
    \end{tabularx}
    
    \caption{Some choices of $\Gamma$.  We denote by $e^i$ the unit vector in direction $i$ and by $\mathbf{0}$ the all-0 vector.}
    \label{tab:gamma_choices}
\end{table}

A \emph{mechanism} $\mu$ in this setting
consists of two functions, the
\emph{allocation function }$q:\mathbb{R}_{+}^{k}\rightarrow\Gamma$ and the
\emph{payment function} $s:\mathbb{R}_{+}^{k}\rightarrow\mathbb{R}$. 
We require our mechanisms to be
both \emph{incentive compatible} (\emph{IC}), i.e.,
$q(x)\cdot x-s(x)\geq q(y)\cdot x-s(y)$
for every $x$ and $y$ in $\mathbb{R}_{+}^{k}$, and \emph{individually
rational} (\emph{IR}), i.e.,
$q(x)\cdot x-s(x)\geq 0$
for every $x$ in $\mathbb{R}_{+}^{k}.$

We consider the Bayesian setting where the buyer's valuation is given by a random variable $X$ with values in $\mathbb{R}_{+}^{k}$; the seller knows only the distribution of $X$ (we refer to $X$ as a 
\emph{random valuation}).  
The revenue that a mechanism $\mu$ extracts from
$X$ 
is the expected payment, $R(\mu;X):=\mathbb{E}\left[  s(X)\right]  ,$ and the \emph{optimal
revenue} that can be extracted from $X$ is
$\Rev_\Gamma(X):=\sup_\mu R(\mu;X)$,
where the supremum is taken over all (IC and IR) mechanisms $\mu$.  We can now state our theorem.

\begin{theorem}
\label{th:exist}For every compact set of possible allocations $\Gamma$ and every $k$-good random valuation $X$ with finite expectation there exists a
revenue-maximizing mechanism $\mu$, i.e., 
$R(\mu;X)=\Rev_\Gamma(X)$.
\end{theorem}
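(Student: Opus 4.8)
The plan is to argue with the \emph{buyer payoff functions} rather than directly with allocations and payments. For an IC mechanism $\mu=(q,s)$ put $u_\mu(x):=q(x)\cdot x-s(x)$. Incentive compatibility says precisely that $u_\mu(x)=\sup_{y}\big(q(y)\cdot x-s(y)\big)$, so $u_\mu$ is a pointwise supremum of affine functions whose gradients lie in $\Gamma$; hence $u_\mu$ is convex and $M$-Lipschitz, where $M:=\sup_{g\in\Gamma}|g|<\infty$ since $\Gamma$ is compact (and the supremum is finite because $u_\mu(0)=-s(0)$ is a real number). Individual rationality says $u_\mu\ge 0$, and since $\Gamma\subseteq\mathbb{R}_+^k$ the function $u_\mu$ is nondecreasing in every coordinate. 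Moreover $q(x)$ is a subgradient of $u_\mu$ at $x$, meaning $u_\mu(y)\ge u_\mu(x)+q(x)\cdot(y-x)$ for all $y\in\mathbb{R}_+^k$, and $s(x)=q(x)\cdot x-u_\mu(x)$. Conversely, for any convex $u\ge 0$ on $\mathbb{R}_+^k$ and any selection $g(x)\in\partial u(x)\cap\Gamma$ (subgradient in the same global sense), the allocation $x\mapsto g(x)$ together with the payment $x\mapsto g(x)\cdot x-u(x)$ is an IC and IR mechanism with payoff function $u$: IR is $u\ge 0$, and IC is exactly the subgradient inequality. Two reductions: since $s(x)\le q(x)\cdot x\le M|x|$ we get $\Rev_\Gamma(X)\le M\,\mathbb{E}|X|<\infty$; and replacing $s$ by $s-s(0)$ preserves IC and (using that $u_\mu$ is nondecreasing) IR and does not lower the revenue, so we may assume $u_\mu(0)=0$, which forces $0\le u_\mu(x)\le M|x|$ everywhere.

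Pick IC and IR mechanisms $\mu_n=(q_n,s_n)$ with $R(\mu_n;X)\to\Rev_\Gamma(X)=:V$; after the above normalization (which does not lower revenue, and revenue is always $\le V$) we still have $R(\mu_n;X)\to V$ and $u_n:=u_{\mu_n}$ satisfies $u_n(0)=0$, so $0\le u_n(x)\le M|x|$. The family $\{u_n\}$ is uniformly $M$-Lipschitz and uniformly bounded on compacts; by Arzel\`a--Ascoli and a diagonal argument over the cubes $[0,n]^k$, a subsequence $u_{n_j}$ converges uniformly on compacts to $u:\mathbb{R}_+^k\to\mathbb{R}_+$, which is again convex, nonnegative, nondecreasing, $M$-Lipschitz, with $u(0)=0$. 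For each $x$, any cluster point $g$ of $(q_{n_j}(x))_j\subseteq\Gamma$ lies in $\Gamma$ and, passing to the limit in $u_{n_j}(y)\ge u_{n_j}(x)+q_{n_j}(x)\cdot(y-x)$, lies in $\partial u(x)$; hence $\partial u(x)\cap\Gamma\neq\emptyset$ for every $x$, and $\limsup_j q_{n_j}(x)\cdot x\le M(x):=\max\{g\cdot x:g\in\partial u(x)\cap\Gamma\}$. Using that $x\mapsto\partial u(x)\cap\Gamma$ has closed graph with values in the fixed compact set $\Gamma$ — hence is a measurable correspondence, and so is the subcorrespondence of maximizers of $g\cdot x$ — choose a measurable selection $q^{\ast}(x)\in\partial u(x)\cap\Gamma$ with $q^{\ast}(x)\cdot x=M(x)$, and set $s^{\ast}(x):=q^{\ast}(x)\cdot x-u(x)$. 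By the converse direction above, $\mu^{\ast}:=(q^{\ast},s^{\ast})$ is an IC and IR mechanism, so $R(\mu^{\ast};X)\le V$.

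For the reverse inequality, write $R(\mu;X)=\mathbb{E}[q(X)\cdot X]-\mathbb{E}[u_\mu(X)]$ for each of our mechanisms, both expectations being finite (between $0$ and $M\,\mathbb{E}|X|$). Since $0\le u_{n_j}(X)\le M|X|\in L^1$ — this is where finite expectation of $X$ enters — and $u_{n_j}(X)\to u(X)$ pointwise, dominated convergence gives $\mathbb{E}[u_{n_j}(X)]\to\mathbb{E}[u(X)]$. Since $0\le q_{n_j}(X)\cdot X\le M|X|\in L^1$, the reverse Fatou lemma gives $\limsup_j\mathbb{E}[q_{n_j}(X)\cdot X]\le\mathbb{E}\big[\limsup_j q_{n_j}(X)\cdot X\big]\le\mathbb{E}[M(X)]=\mathbb{E}[q^{\ast}(X)\cdot X]$. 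Combining, $R(\mu^{\ast};X)=\mathbb{E}[q^{\ast}(X)\cdot X]-\mathbb{E}[u(X)]\ge\limsup_j\big(\mathbb{E}[q_{n_j}(X)\cdot X]-\mathbb{E}[u_{n_j}(X)]\big)=\limsup_j R(\mu_{n_j};X)=V$, so $R(\mu^{\ast};X)=V$.

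The substantive step is recovering a genuine mechanism from the limiting payoff function $u$: one needs $\partial u(x)\cap\Gamma\neq\emptyset$ at every $x$ (the cluster-point argument), a \emph{measurable} payoff-maximizing subgradient selection so that $s^{\ast}$ is an honest payment function — the one place where a standard selection theorem is invoked — and the fact that revenue cannot drop in the limit, which works because the allocation term is controlled from above by reverse Fatou while the payoff term converges by dominated convergence, both resting on the integrable envelope $M|X|$ (exactly where finite expectation is used). Everything else is soft convex analysis and routine estimates; note that we never differentiate $u$, only use subgradient inequalities, so no regularity of $u$ or of the distribution of $X$ is required.
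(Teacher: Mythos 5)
Your proof is correct and follows essentially the same route as the paper's: pass to a locally uniform limit of the (normalized, $\gamma$-Lipschitz) buyer payoff functions, equip the limit with the seller-favorable allocation maximizing $g\cdot x$ over $\partial u(x)\cap\Gamma$, and control the revenue via Fatou/dominated convergence under the integrable envelope $M\Vert X\Vert$, which is exactly where finite expectation enters. The differences are only presentational: you split the revenue into allocation and payoff terms and prove the subgradient-limit facts by a direct cluster-point argument rather than citing Rockafellar's Theorem 24.5, and you add an explicit measurable-selection step that the paper sidesteps by writing $s^{*}(x)=b'(x;x)-b(x)$ as an upper semicontinuous formula.
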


It thus follows (from Table \ref{tab:gamma_choices}) that $\Rev$ is attained for $k$ goods, in the additive case as well as in  the unit-demand case; the same holds for $\textsc{DRev}$, the revenue by deterministic mechanisms. For the bundled revenue $\textsc{BRev}$ and the separate revenue $\textsc{SRev}$, this follows from the single-good case.\footnote{For $\textsc{BRev}$ this also follows by taking $\Gamma=\{\mathbf{0},\mathbf{1}\}$, where $\mathbf{1}$ denotes the all-1 vector.} In the Appendix we show how our construct yields existence for additional subclasses of mechanisms: monotonic and allocation-monotonic mechanisms.

\section{The Model\label{s:model}}

Let $k\geq1$ be the dimension. The domain of
\emph{valuations} is $\mathbb{R}_{+}^{k}$, the nonnegative orthant of
$\mathbb{R}^{k},$ and the set \emph{allocations }is a nonempty compact set
$\Gamma\subset\mathbb{R}_{+}^{k}$ (such as  the unit cube, the unit simplex, or their vertices).\footnote{The allocations $g$ and the valuations $x$ belong to \emph{dual} $\mathbb{R}^k$ spaces, both conveniently endowed with the standard Euclidean norm (we do not need precise bounds here, and so do not use more appropriate norms as in \cite{HN25}).}
\ignore{In the leading examples $\Gamma$ is the
unit cube or the unit simplex.}
Let 
$\gamma:=\max_{g\in\Gamma}\left\Vert
g\right\Vert .$

\subsection{Mechanisms}

A (direct) $\Gamma$-\emph{mechanism} $\mu$ consists of two functions, the
\emph{allocation function }$q:\mathbb{R}_{+}^{k}\rightarrow\Gamma$ and the
\emph{payment function} $s:\mathbb{R}_{+}^{k}\rightarrow\mathbb{R}$. A
mechanism $\mu$ is \emph{incentive compatible} (\emph{IC}) if%
\[
q(x)\cdot x-s(x)\geq q(y)\cdot x-s(y)
\]
for every $x$ and $y$ in $\mathbb{R}_{+}^{k}$; and it is \emph{individually
rational} (\emph{IR}) if%
\[
q(x)\cdot x-s(x)\geq0
\]
for every $x$ in $\mathbb{R}_{+}^{k}.$ Thus, when the buyer's valuation (or
type) is $x,$ his payoff is
\begin{equation}
b(x):=q(x)\cdot x-s(x),\label{eq:b}%
\end{equation}
and the mechanism's payoff (or revenue) is $s(x).$ Individual rationality (IR)
requires that 
\[
b(x)\geq0
\]
for every $x\in\mathbb{R}_{+}^{k},$ and incentive
compatibility (IC) that
\begin{equation}
b(x)=\max_{z\in\mathbb{R}_{+}^{k}}[q(z)\cdot x-s(z)]\label{eq:b-max}
\end{equation}
for every $x\in\mathbb{R}_{+}^{k}.$ Hereafter we will write a mechanism
as\footnote{While $b$ is fully determined by $q$ and $s,$ it is convenient for
the statements below to have $b$ included in $\mu$ as well (rather than saying
``a mechanism $\mu$ with buyer payoff function $b$'').}
$\mu=(q,s,b).$

\subsection{Revenue}

A \emph{random valuation} $X$ is a random variable with values in
$\mathbb{R}_{+}^{k}.$ The revenue that a mechanism $\mu=(q,s,b)$ extracts from
$X$ is $R(\mu;X):=\mathbb{E}\left[  s(X)\right]  ,$ and the \emph{optimal
revenue} that can be extracted from $X$ is
\[
\text{\textsc{Rev}}_{\Gamma}(X)%
{\;:=\;}%
\sup R(\mu;X),
\]
where the supremum is taken over all IC and IR $\Gamma$-mechanisms $\mu$.

When maximizing revenue it suffices to consider only those IC and IR
mechanisms that satisfy the \emph{no positive transfer} (\emph{NPT})
property: $s(x) \ge 0$ for every $x$. Indeed, if the minimal payment, which is  $s(\mathbf{0})$ (by IC at $\mathbf{0}$), is negative, then increasing all payments by $|s(\mathbf{0})|$ preserves IC and IR and increases the revenue. Moreover, since $b(\mathbf{0})=-s(\mathbf{0})$, for IR mechanisms NPT is equivalent to $s(\mathbf{0})=0$, and thus to $b(\mathbf{0})=0$ (cf. Proposition 6 in
\cite{HN12}). 

Let $\mathcal{M}_{\Gamma}$ denote the set of all
IC, IR, and NPT $\Gamma$-mechanisms.

\subsection{Buyer Payoff Functions}

To avoid having to deal with inessential technical issues on the boundary of\footnote{For instance, subgradients with (arbitrarily large) negative coordinates at boundary points.} $\mathbb{R}_{+}^{k}$, it is
convenient to extend the buyer payoff function $b$ to
an open neighborhood of $\mathbb{R}_{+}^{k}$, in fact to the entire space $\mathbb{R}^{k}$ (cf. the Appendix of \cite{HR15}), by%
\begin{equation}
b(x):=\sup_{z\in\mathbb{R}_{+}^{k}}[q(z)\cdot x-s(z)] \label{eq:b-ext}%
\end{equation}
for every $x\in\mathbb{R}^{k}$ (i.e., by extending (\ref{eq:b-max})).  The resulting function $b$ is well defined and
finite for every $x\in\mathbb{R}^{k}$, because for each $z\in\mathbb{R}%
_{+}^{k}$ the function $q(z)\cdot x-s(z)$ is $\gamma$-Lipschitz in $x$ (recall
that $\gamma=\max_{g\in\Gamma}\left\Vert g\right\Vert $), and thus so is the
supremum of these functions, $b$: for every $x,y\in\mathbb{R}^{k}$ we have $\left\vert
b(x)-b(y)\right\vert \leq\gamma\left\Vert x-y\right\Vert $. Hereafter $b$
will always stand for this extended function $b:\mathbb{R}^{k} \rightarrow
\mathbb{R}$ given by (\ref{eq:b-ext}).

We recall now a few basic concepts for convex functions (see \cite{Rock}, Sections
23--25; for the convergence results, see in particular Theorems 24.5, 24.6, and 25.6 there). 
Let $f:\mathbb{R}^{k}\rightarrow\mathbb{R}$ be a real convex function
defined on $\mathbb{R}^{k}$ (i.e., $\mathrm{dom}~f=\mathbb{R}^{k}$). A vector
$g\in\mathbb{R}^{k}$ is a \emph{subgradient} of $f$ at $x\in\mathbb{R}^{k}$ if
$f(y)-f(x)\geq g\cdot(y-x)$ for all $y\in\mathbb{R}^{k}$. The set of
subgradients of $f$ at $x,$ denoted by $\partial f(x)$, is a nonempty convex and compact
set. When $b$ is differentiable at $x$, which holds almost everywhere, the
unique subgradient is the gradient, i.e., $\partial b(x)=\{\nabla b(x)\}$; let
$D$ denote the set of points where $b$ is differentiable. The
\emph{directional derivative of }$f$ \emph{at} $x\in\mathbb{R}^{k}$ \emph{in
the direction} $y\in\mathbb{R}^{k}$ is $f^{\prime}(x;y):=\lim_{\delta
\rightarrow0^{+}}(f(x+\delta y)-f(x))/\delta$. It always exists, and
$f^{\prime}(x;y)=\max\{g\cdot y:g\in\partial f(x)\}$; let $\partial
f(x)_{y}:=\{g\in\partial f(x):g\cdot y=f^{\prime}(x;y)\}$ denote the set of
maximizers. Let $x_{n}\rightarrow x$; if $g_{n}$ is a subgradient of $f$ at
$x_{n}$, i.e., $g_{n}\in\partial f(x_{n})$, and $g_{n}\rightarrow g$ then $g$
is a subgradient of $f$ at $x$, i.e., $g\in\partial f(x)$; moreover, if
$x_{n}\rightarrow x$ from the direction $y$, i.e., $(x_{n}-x)/\left\Vert
x_{n}-x\right\Vert \rightarrow y$, then the subgradient $g$ is maximal in the
direction $y,$ i.e., $g\in\partial f(x)_{y}$. 
Finally, the set of subgradients $\partial b(x)$ is the closed convex hull of the set of all limit points of sequences of gradients $\nabla b(x_n)$, where $x_n$ is a sequence in $D$ converging to $x$. 

\bigskip

Let $\mathcal{B}_{\Gamma}$ denote the set of all convex functions
$b:\mathbb{R}^{k}\rightarrow\mathbb{R}$ with $b(\mathbf{0})=0$ and subgradients
in $\Gamma$, by which we mean that at every $x$ in $\mathbb{R}^{k}$ there is a subgradient in $\Gamma$, i.e., $\partial b(x)\cap\Gamma\neq\emptyset$. Since $\Gamma$ is a compact set, it suffices to require that $\nabla b(x)\in\Gamma$ for every $x\in\mathbb{R}^{k}$ where
$b$ is differentiable, i.e., $x\in D$. 
Indeed, take a sequence of points $x_{n}$
in $D$ converging to $x$; the gradients $\nabla b(x_{n})$ are all in the
compact set $\Gamma$, and so any limit point of the sequence $\nabla b(x_{n})$---which is a subgradient at the limit point $x$---is also in\footnote{When the set $\Gamma$ is in addition convex, \emph{all}
subgradients belong to $\Gamma,$ i.e., $\partial
b(x)\subseteq\Gamma$ for every $x\in\mathbb{R}^{k}$ (by Theorem 25.6 in
\cite{Rock}).} $\Gamma$. Moreover, by taking $x_{n}\in D$ so that it converges to $x$ from the
direction $y$---for instance, take $x_{n}$ in $D$ to be within a distance of
$1/n^{2}$ from $x+(1/n)y$---we obtain $\partial b(x)_{y}\cap\Gamma\neq
\emptyset$ for every $x$ and $y$ in $\mathbb{R}^{k}$. Finally, the inequality $b(y) \ge b(x) +g \cdot (y-x)$ with $g \in \partial b(x) \cap \Gamma$ gives $b(x)-b(y) \le \gamma \Vert x-y \Vert$,  and so every function $b$ in $\mathcal{B}_\Gamma$ is $\gamma$-Lipschitz;  together with $b(\mathbf{0})=0$, it follows that
\begin{equation}
|b(x)| \le \gamma\left\Vert x\right\Vert
\label{eq:bdd}%
\end{equation}
for every $x$ in $\mathbb{R}^{k}.$

\subsection{Buyer Payoff Functions and Mechanisms}

The following is a classic result (see \cite{Roch}, \cite{HN12}), restated for
general $\Gamma$-mechanisms.

\begin{proposition}
\label{p:M_Gamma}Let $\mu=(q,s,b)$ be a $\Gamma$-mechanism. Then $\mu$ is in
$\mathcal{M}_{\Gamma}$ if and only if the function $b$ is in $\mathcal{B}%
_{\Gamma}$ and, for every $x\in\mathbb{R}_{+}^{k}$, the vector $q(x)\in\Gamma$
is a subgradient of $b$ at $x$, i.e., $q(x)\in\partial b(x)\cap\Gamma$.
\end{proposition}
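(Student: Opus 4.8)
The plan is to unwind both implications directly from the definitions, the one structural input being that the extended buyer payoff function (\ref{eq:b-ext}) is the pointwise supremum of the affine maps $x\mapsto q(z)\cdot x-s(z)$ over $z\in\mathbb{R}_{+}^{k}$, each having slope $q(z)\in\Gamma$.

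For the ``only if'' direction I would argue as follows. Suppose $\mu=(q,s,b)\in\mathcal{M}_{\Gamma}$. Being a pointwise supremum of affine functions (finite-valued, as recalled in the text), $b$ is convex on $\mathbb{R}^{k}$. IR at $\mathbf{0}$ gives $s(\mathbf{0})\le 0$ and NPT gives $s(\mathbf{0})\ge 0$, so $s(\mathbf{0})=0$ and $b(\mathbf{0})=q(\mathbf{0})\cdot\mathbf{0}-s(\mathbf{0})=0$. For $x\in\mathbb{R}_{+}^{k}$, IC is exactly (\ref{eq:b-max}), so the supremum in (\ref{eq:b-ext}) is attained at $z=x$ and $b(x)=q(x)\cdot x-s(x)$; plugging $z=x$ into (\ref{eq:b-ext}) at an arbitrary $y\in\mathbb{R}^{k}$ gives $b(y)\ge q(x)\cdot y-s(x)=b(x)+q(x)\cdot(y-x)$, that is, $q(x)\in\partial b(x)$, and $q(x)\in\Gamma$ because $\mu$ is a $\Gamma$-mechanism. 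It remains to verify $b\in\mathcal{B}_{\Gamma}$, i.e.\ that $b$ has a subgradient in $\Gamma$ at \emph{every} point of $\mathbb{R}^{k}$ — including points outside $\mathbb{R}_{+}^{k}$, where $q$ is not defined; I expect this to be the only step needing a moment's thought. Fix $x\in\mathbb{R}^{k}$ and choose $z_{n}\in\mathbb{R}_{+}^{k}$ with $q(z_{n})\cdot x-s(z_{n})\to b(x)$ (possible since $b(x)$ is finite). By compactness of $\Gamma$ one may assume $q(z_{n})\to g\in\Gamma$, hence $s(z_{n})\to g\cdot x-b(x)$, and then for every $y\in\mathbb{R}^{k}$ the inequality $b(y)\ge q(z_{n})\cdot y-s(z_{n})$ passes to the limit to yield $b(y)\ge b(x)+g\cdot(y-x)$; thus $g\in\partial b(x)\cap\Gamma$, and so $b\in\mathcal{B}_{\Gamma}$. (One could instead invoke the text's reduction to gradients at points of differentiability, but differentiability points lying outside $\mathbb{R}_{+}^{k}$ still call for this compactness argument, so it is cleanest to run it uniformly.)

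For the ``if'' direction, suppose $b\in\mathcal{B}_{\Gamma}$ and $q(x)\in\partial b(x)\cap\Gamma$ for every $x\in\mathbb{R}_{+}^{k}$, and recall that $b$ agrees with $q(x)\cdot x-s(x)$ on $\mathbb{R}_{+}^{k}$ by (\ref{eq:b}). For $x,y\in\mathbb{R}_{+}^{k}$, the subgradient inequality at $y$ (with $q(y)\in\partial b(y)$) reads $b(x)\ge b(y)+q(y)\cdot(x-y)$; substituting $b(x)=q(x)\cdot x-s(x)$ and $b(y)=q(y)\cdot y-s(y)$ gives $q(x)\cdot x-s(x)\ge q(y)\cdot x-s(y)$, which is IC. For IR, membership in $\mathcal{B}_{\Gamma}$ supplies some $g\in\partial b(\mathbf{0})\cap\Gamma$; since $\Gamma\subset\mathbb{R}_{+}^{k}$ we have $g\ge 0$, so $b(x)\ge b(\mathbf{0})+g\cdot x=g\cdot x\ge 0$ for all $x\in\mathbb{R}_{+}^{k}$. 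Finally, applying the IC inequality just proved with the pair $(\mathbf{0},x)$ gives $-s(\mathbf{0})\ge -s(x)$, i.e.\ $s(x)\ge s(\mathbf{0})=q(\mathbf{0})\cdot\mathbf{0}-b(\mathbf{0})=0$, which is NPT. Hence $\mu\in\mathcal{M}_{\Gamma}$. Apart from the subgradient-in-$\Gamma$ check outside the orthant, the whole argument is a rearrangement of the defining inequalities, so I do not anticipate any real obstacle.
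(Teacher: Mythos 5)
Your proposal is correct and follows essentially the same route as the paper: convexity of $b$ as a supremum of affine functions, IC giving the subgradient property of $q(x)$ on the orthant, a compactness-of-$\Gamma$ argument (your maximizing sequence $z_n$ is the paper's ``closure of the menu'') for subgradients at points outside $\mathbb{R}_{+}^{k}$, and the same rearrangements of the subgradient inequality for the converse. The only cosmetic difference is that in the converse you verify the given mechanism using $s(x)=q(x)\cdot x-b(x)$, whereas the paper constructs a mechanism from $b$ by that same formula.
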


\begin{proof}
If $\mu$ is in $\mathcal{M}_{\Gamma}$ then $b$ is a convex function (as the
supremum of affine functions), and satisfies $b(\mathbf{0})=0$ (by IR and NPT). For every $x\in\mathbb{R}_{+}^{k}$ the vector $q(x)\in\Gamma$ is a
subgradient of $b$ at $x$, because for every $y\in\mathbb{R}^{k}$ we have
$b(y)\geq q(x)\cdot y-s(x)=b(x)+q(x)\cdot(y-x)$ (by IC). For $x$ outside $\mathbb{R}_+^k$, by the compactness of $\Gamma$ there is $(g,t) \in \Gamma \times \mathbb{R}$ in the closure of $\{(q(z),s(z)):z \in \mathbb{R}^{k}_+\}$  (the ``menu'' of $\mu$) such that $b(x)=g \cdot x -t$, and then, as above, $b(y)\geq g\cdot y-t=b(x)+g\cdot(y-x)$, and thus $g \in \Gamma$ is a subgradient of $b$ at $x$. Therefore $\partial b(x)\cap\Gamma \neq \emptyset$ for every $x \in \mathbb{R}^k$, and so $b \in \mathcal{B}_\Gamma$.

Conversely, if $b\in\mathcal{B}_{\Gamma}$ then for each
$x\in\mathbb{R}_{+}^{k}$ choose $q(x)\in\partial b(x)\cap\Gamma\neq\emptyset$
and put $s(x):=q(x)\cdot x-b(x);$ then IR holds because for every
$x\in\mathbb{R}_{+}^{k}$ we have $b(x)\geq b(\mathbf{0})+q(\mathbf{0}%
)\cdot(x-0)\geq0$ (use $q(\mathbf{0})\in\partial
b(\mathbf{0})$, $b(\mathbf{0})=0$, and $q(\mathbf{0})\in\Gamma\subset\mathbb{R}_{+}^{k}$), IC
because for every $x,y\in\mathbb{R}_{+}^{k}$ we have $q(y)\cdot
x-s(y)=b(y)+q(y)\cdot(x-y)\leq b(x)$ (the inequality because $q(y)\in\partial
b(y)$), and NPT because $b(\mathbf{0})=0$. \ignore{$\geq b(x)+q(x)\cdot
(\mathbf{0}-x)=-s(x)$ (the inequality because $q(\mathbf{0)\in\partial
}b(\mathbf{0)}$).}
\end{proof}

\bigskip

Next, we see how the payments are determined by the buyer payoff function (see \cite{HR15}).

\begin{proposition}
\label{p:sf}Let $b\in\mathcal{B}_{\Gamma}$. For every $\mu=(q,s,b)$ in
$\mathcal{M}_{\Gamma}$ we have $s(x)\leq b^{\prime}(x;x)-b(x)$ for every
$x\in\mathbb{R}_{+}^{k},$ and there is a mechanism $\mu^{\ast}=(q^{\ast
},s^{\ast},b)$ in $\mathcal{M}_{\Gamma}$ with $s^{\ast}(x)=b^{\prime
}(x;x)-b(x)$ for every $x\in\mathbb{R}_{+}^{k}$.
\end{proposition}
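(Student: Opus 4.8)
The plan is to extract both statements directly from Proposition~\ref{p:M_Gamma} together with the identification $b'(x;x)=\max\{g\cdot x:g\in\partial b(x)\}$.

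First, the inequality. Let $\mu=(q,s,b)$ be any mechanism in $\mathcal{M}_\Gamma$. Since $\mu$ is a mechanism, $b(x)=q(x)\cdot x-s(x)$, i.e. $s(x)=q(x)\cdot x-b(x)$, for every $x\in\mathbb{R}_+^k$. By Proposition~\ref{p:M_Gamma} we have $q(x)\in\partial b(x)\cap\Gamma$; in particular $q(x)$ is one of the vectors over which the maximum defining $b'(x;x)=\max\{g\cdot x:g\in\partial b(x)\}$ is taken, so $q(x)\cdot x\le b'(x;x)$. Subtracting $b(x)$ yields $s(x)\le b'(x;x)-b(x)$, as asserted.

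Second, the construction of $\mu^\ast$. For each $x\in\mathbb{R}_+^k$ I would choose a subgradient $q^\ast(x)\in\partial b(x)_x\cap\Gamma$, that is, a subgradient in $\Gamma$ that \emph{attains} the maximum in the direction $x$. For $x\ne\mathbf{0}$, nonemptiness of this set is precisely the fact established in the paragraph on $\mathcal{B}_\Gamma$, namely $\partial b(x)_y\cap\Gamma\ne\emptyset$ for all $x,y\in\mathbb{R}^k$ (take $y=x$); for $x=\mathbf{0}$ one has $b'(\mathbf{0};\mathbf{0})=0=g\cdot\mathbf{0}$ for every $g$, so $\partial b(\mathbf{0})_{\mathbf{0}}=\partial b(\mathbf{0})$, which meets $\Gamma$ because $b\in\mathcal{B}_\Gamma$. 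Now set $s^\ast(x):=q^\ast(x)\cdot x-b(x)$; since $q^\ast(x)\in\partial b(x)_x$ this equals $b'(x;x)-b(x)$, as required. Because $q^\ast(z)\in\partial b(z)$ we have $b(y)\ge b(z)+q^\ast(z)\cdot(y-z)=q^\ast(z)\cdot y-s^\ast(z)$ for all $y,z$, with equality at $z=y$ when $y\in\mathbb{R}_+^k$; hence the buyer payoff function attached to $(q^\ast,s^\ast)$ is exactly $b$ (for $y$ outside $\mathbb{R}_+^k$ one argues as in the proof of Proposition~\ref{p:M_Gamma}, via compactness of $\Gamma$ and the closure of the menu). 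Since $b\in\mathcal{B}_\Gamma$ and $q^\ast(x)\in\partial b(x)\cap\Gamma$, the converse direction of Proposition~\ref{p:M_Gamma} gives $\mu^\ast=(q^\ast,s^\ast,b)\in\mathcal{M}_\Gamma$, completing the proof.

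One point worth recording is the measurability of $s^\ast$, so that $R(\mu^\ast;X)=\mathbb{E}[s^\ast(X)]$ is well defined: writing the difference quotient in its monotone form, $b'(x;x)=\inf_{\delta>0}\bigl(b((1+\delta)x)-b(x)\bigr)/\delta$, exhibits $x\mapsto b'(x;x)$ as an infimum of continuous functions, hence upper semicontinuous and Borel, so $s^\ast=b'(\,\cdot\,;\cdot)-b$ is Borel. I expect the only step requiring any care is the nonemptiness of $\partial b(x)_x\cap\Gamma$ (including the degenerate case $x=\mathbf{0}$); this is already furnished by the preliminaries, and the rest is bookkeeping with subgradient inequalities.
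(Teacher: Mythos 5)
Your proposal is correct and follows essentially the same route as the paper: the inequality comes from $q(x)\in\partial b(x)$ and $b'(x;x)=\max\{g\cdot x:g\in\partial b(x)\}$, and $\mu^\ast$ is built by selecting $q^\ast(x)\in\partial b(x)_x\cap\Gamma$ (nonempty by the preliminaries on $\mathcal{B}_\Gamma$) and invoking the converse direction of Proposition~\ref{p:M_Gamma}. Your added remarks on the case $x=\mathbf{0}$ and on the Borel measurability of $s^\ast$ are correct refinements that the paper leaves implicit.
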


\begin{proof}
Since $q(x)\in\partial b(x)$ by Proposition \ref{p:M_Gamma}, we get
$s(x)=q(x)\cdot x-b(x)\leq b^{\prime}(x;x)-b(x)$. When constructing $\mu$ from
$b$ in the proof of Proposition \ref{p:M_Gamma} we can choose $q^{\ast}(x)$ to be moreover maximal in the direction $x$,
i.e., $q^{\ast}(x)\in\partial b(x)_{x}\cap\Gamma\neq\emptyset$, for each
$x\in\mathbb{R}_{+}^{k}$; then $q^{\ast}(x)\cdot x=b^{\prime}(x;x)$, and so
$s^{\ast}(x):=q^{\ast}(x)\cdot x-b(x)=b^{\prime}(x;x)-b(x)$.
\end{proof}

\bigskip

The mechanism $\mu^{\ast}$ of Proposition \ref{p:sf}, called a
\emph{seller-favorable} mechanism in \cite{HR15}, yields to the seller the
highest payments obtainable from all mechanisms with the same buyer payoff function $b$
(it amounts to the buyer, when indifferent, breaking ties in favor of the
seller); when maximizing revenue, the seller-favorable mechanisms are the only ones that matter. Thus, $\Rev_\Gamma (X) = \sup \mathbb{E}[b'(X;X)-b(X)]$, where the supremum is taken over all $b \in \mathcal{B}_\Gamma$. 

\bigskip

\section{Proof}\label{s:proof}

The proof consists in showing, first, 
that the set of buyer payoff functions is sequentially compact with respect to
pointwise convergence (see Proposition \ref{l:subseq} below), and second, that
the revenue is upper semicontinuous with
respect to this convergence (see Proposition \ref{p:bn->b} below, which uses the integrability of the valuation\footnote{As shown by the example in the Introduction, an optimal mechanism need not exist otherwise.}).

\begin{proposition}
\label{l:subseq}Let $b_{n},$ for $n=1,2,...$, be a sequence of functions in
$\mathcal{B}_{\Gamma}.$ Then there exists a subsequence $n^{\prime},$ w.l.o.g.~the
original sequence $n$, such that $b_{n}$ converges pointwise to a limit
function $b$, i.e., $\lim_{n\rightarrow\infty}b_{n}(x)=b(x)$ for every
$x\in\mathbb{R}^{k}$, and the function $b$ is in $\mathcal{B}_{\Gamma}$.
\end{proposition}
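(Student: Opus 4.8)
The plan is to apply the Arzel\`a--Ascoli theorem together with a diagonal argument to extract a pointwise (in fact locally uniform) limit $b$, and then to check that $b$ inherits the four defining properties of $\mathcal{B}_\Gamma$: convexity, the bound at $\mathbf{0}$, $\gamma$-Lipschitzness, and having a subgradient in $\Gamma$ at every point. The first observation is that the family $\{b_n\}$ is \emph{uniformly} equi-Lipschitz and locally uniformly bounded: each $b_n$ is $\gamma$-Lipschitz (shown just before the statement) and satisfies $|b_n(x)|\le\gamma\Vert x\Vert$ by (\ref{eq:bdd}). Hence, for each $R=1,2,\dots$, the restrictions of the $b_n$ to the closed ball $\{x:\Vert x\Vert\le R\}$ form an equicontinuous, uniformly bounded family, so some subsequence converges uniformly there; diagonalizing over $R$ produces a single subsequence, which I relabel as the original sequence $n$, such that $b_n$ converges uniformly on every bounded set to a limit $b:\mathbb{R}^k\to\mathbb{R}$, and in particular $b_n(x)\to b(x)$ for every $x$.

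Next I would verify that $b\in\mathcal{B}_\Gamma$. Convexity of $b$ is immediate, being a pointwise limit of convex functions; likewise $b(\mathbf{0})=\lim_n b_n(\mathbf{0})=0$, and the bound $|b(x)|\le\gamma\Vert x\Vert$ (hence $\gamma$-Lipschitzness of $b$) passes to the limit. It remains to show that $\partial b(x)\cap\Gamma\neq\emptyset$ for every $x$; as noted before the statement, it suffices to prove that $\nabla b(x)\in\Gamma$ for every $x$ in the set $D$ of differentiability points of $b$. Fix such an $x$. Since $b_n\in\mathcal{B}_\Gamma$, choose $g_n\in\partial b_n(x)\cap\Gamma$; as $\Gamma$ is compact, there is a subsequence $g_{n_j}\to g\in\Gamma$. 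From $b_{n_j}(y)\ge b_{n_j}(x)+g_{n_j}\cdot(y-x)$ for all $y\in\mathbb{R}^k$, letting $j\to\infty$ (using $b_{n_j}(y)\to b(y)$, $b_{n_j}(x)\to b(x)$, and $g_{n_j}\to g$) yields $b(y)\ge b(x)+g\cdot(y-x)$ for all $y$, so $g\in\partial b(x)$. Since $b$ is differentiable at $x$, $\partial b(x)=\{\nabla b(x)\}$, and therefore $\nabla b(x)=g\in\Gamma$. Thus $b\in\mathcal{B}_\Gamma$, which completes the argument.

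I do not expect a genuine obstacle here; the proof is essentially a compactness packaging argument. The two places that warrant care are the diagonal extraction that upgrades convergence on the exhausting family of balls to pointwise convergence on all of $\mathbb{R}^k$, and the final step, where one must use differentiability of the \emph{limit} $b$ (not of the $b_n$) to identify $\nabla b(x)$ as the unique element of $\partial b(x)$, which then lies in $\Gamma$ precisely because it is a limit of elements of the compact set $\Gamma$. (One could alternatively invoke the description of $\partial b(x)$ as the closed convex hull of limit points of gradients $\nabla b(x_m)$ with $x_m\in D$, but the direct argument above is more self-contained.)
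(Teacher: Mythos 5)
Your proof is correct and follows essentially the same route as the paper: the paper invokes Theorem 10.9 of Rockafellar for the pointwise-convergent subsequence (its footnote describes exactly your Arzel\`a--Ascoli/diagonal construction) and Theorem 24.5 for the convergence of subgradients, where you instead pass to the limit in the subgradient inequality directly. The only cosmetic difference is that your final step via differentiability points is unnecessary---your argument already exhibits an element of $\partial b(x)\cap\Gamma$ at \emph{every} $x$, which is all that is required.
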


\begin{proof}
For each $x$ the sequence $(b_{n}(x))_{n\geq1}$ is bounded (by $\gamma \left\Vert x\right\Vert$; see (\ref{eq:bdd})),
and so Theorem 10.9 of \cite{Rock} gives the result.\footnote{The construction
is standard (cf.~the Helly selection theorem, and the Arzel\`{a}--Ascoli theorem, which
suffices for bounded domains of valuations): take a countable dense set of
points for which we obtain a sequence of convergent subsequences, then use the
``diagonal'' subsequence, and apply
continuity.} By Theorem 24.5 in \cite{Rock}, 
the sets $\partial b_{n}(x)$ converge to the set $\partial b(x),$
and so $\partial b_{n}(x)\cap\Gamma\neq\emptyset$ implies $\partial
b(x)\cap\Gamma\neq\emptyset$ (because $\Gamma$ is compact). Together with
$b(\mathbf{0})=\lim_{n}b_{n}(\mathbf{0})=0$, we get $b\in\mathcal{B}_{\Gamma}.$
\end{proof}

\begin{proposition}
\label{p:bn->b}Let $\mu_{n}=(q_{n},s_{n},b_{n}),$ for $n=1,2,...$, 
and $\mu=(q,s,b)$ be mechanisms in $\mathcal{M}_{\Gamma}$. If $b_{n}$ converges
pointwise to $b$ and $\mu$ is seller favorable, then 
\[
\limsup_{n\rightarrow\infty}s_{n}(x)\leq s(x)
\]
for
every $x$, and thus 
\[
\limsup_{n\rightarrow\infty}R(\mu_{n};X)\leq R(\mu;X)
\]
for every random valuation $X$ with finite expectation.
\end{proposition}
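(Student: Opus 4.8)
The plan is to prove the pointwise statement $\limsup_n s_n(x)\le s(x)$ first, and then lift it to expectations by a uniform-integrability / dominated-convergence argument, using the $\gamma\Vert\cdot\Vert$ bound that holds uniformly over $\mathcal{B}_\Gamma$ together with the finite expectation of $X$.

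First I would fix $x\in\mathbb{R}_+^k$ and recall that, since $\mu$ is seller favorable, Proposition~\ref{p:sf} gives $s(x)=b'(x;x)-b(x)$, while for each $n$, since $\mu_n\in\mathcal{M}_\Gamma$, the same proposition gives $s_n(x)\le b_n'(x;x)-b_n(x)$. Because $b_n\to b$ pointwise, $b_n(x)\to b(x)$, so it suffices to show $\limsup_n b_n'(x;x)\le b'(x;x)$. Here I would use the convex-analytic machinery quoted in the excerpt: pick $g_n\in\partial b_n(x)_x\cap\Gamma$ (exists by the argument establishing $\mathcal{B}_\Gamma$ membership), so that $b_n'(x;x)=g_n\cdot x$ with $g_n$ lying in the compact set $\Gamma$. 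Passing to a subsequence along which $\limsup$ is attained and $g_n\to g\in\Gamma$, the stated convergence result for subgradients (Theorem~24.5 in \cite{Rock}, with $x_n\equiv x\to x$) gives $g\in\partial b(x)$, hence $g\cdot x\le\max\{g'\cdot x:g'\in\partial b(x)\}=b'(x;x)$. Therefore $\limsup_n b_n'(x;x)=g\cdot x\le b'(x;x)$, which yields the pointwise conclusion $\limsup_n s_n(x)\le s(x)$.

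For the revenue statement I would apply reverse Fatou (a limsup version of the dominated convergence theorem): we have $s_n(X)\le b_n'(X;X)-b_n(X)$, and I need a single integrable dominating function for the positive parts. Using $s_n\ge 0$ (NPT) from below and, from above, the seller-favorable bound $s_n(x)\le b_n'(x;x)-b_n(x)\le g_n\cdot x\le\Vert g_n\Vert\,\Vert x\Vert\le\gamma\Vert x\Vert$ (since $g_n\in\Gamma$), we get $0\le s_n(X)\le\gamma\Vert X\Vert$ for all $n$; and $\gamma\Vert X\Vert$ is integrable precisely because $X$ has finite expectation. Reverse Fatou then gives $\limsup_n\mathbb{E}[s_n(X)]\le\mathbb{E}[\limsup_n s_n(X)]\le\mathbb{E}[s(X)]=R(\mu;X)$, using the pointwise bound from the previous paragraph in the last inequality (and noting $s(X)$ is itself dominated by $\gamma\Vert X\Vert$, hence integrable, so the expectation is well defined and finite).

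I expect the main obstacle to be the first part: getting $\limsup_n b_n'(x;x)\le b'(x;x)$ cleanly. The subtlety is that directional derivatives of convex functions are only upper semicontinuous in the function (not continuous), which is exactly the direction we need but requires care — one must select the direction-$x$ maximal subgradients $g_n$, keep them in the compact $\Gamma$, extract a convergent subsequence, and then invoke the closedness property of the subdifferential under the joint limit $(x_n,g_n)\to(x,g)$; the fact that $\Gamma$ is compact is what makes the selection and the extraction work. Everything else — the NPT lower bound, the $\gamma\Vert x\Vert$ upper bound, and the reverse Fatou step — is routine once the integrability hypothesis on $X$ is in hand.
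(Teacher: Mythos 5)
Your proposal is correct and follows essentially the same route as the paper: bound $s_n(x)\le b_n'(x;x)-b_n(x)$ via Proposition~\ref{p:sf}, use seller-favorability for $s(x)=b'(x;x)-b(x)$, get $\limsup_n b_n'(x;x)\le b'(x;x)$ from the convergence theory for subgradients, and conclude with the domination $0\le s_n(X)\le\gamma\Vert X\Vert$ and reverse Fatou. The only cosmetic difference is that you unpack the upper semicontinuity of the directional derivative by an explicit selection of maximal subgradients in the compact set $\Gamma$, where the paper simply cites Theorem~24.5 of \cite{Rock}.
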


\begin{proof}
For every $x$ in $\mathbb{R}_{+}^{k}$, we have%
\[
\limsup_{n\rightarrow\infty}s_{n}(x)\leq\limsup_{n\rightarrow\infty} \left[
b_{n}^{\prime}(x;x)-b_{n}(x)\right]  \leq b^{\prime}(x;x)-b(x)=s(x)
\]
(the first inequality by Proposition \ref{p:sf}, the second because $\limsup_{n}
b_{n}^{\prime}(x;x) \leq b^{\prime}(x;x)$ by Theorem 24.5 in \cite{Rock}, and the final equality because $\mu$ is seller favorable). 

Next, 
\[
0\leq s_{n}(x)\leq q_{n}(x)\cdot x\leq\left\Vert q_{n}%
(x)\right\Vert \left\Vert x\right\Vert \leq\gamma\left\Vert x\right\Vert 
\]
(the first two inequalities by NPT and IR) for
every $x$ in $\mathbb{R}_{+}^{k}$ and $n\geq1$, and thus, for an integrable $X$,
the sequence $s_{n}(X)$ is dominated by the integrable function $\gamma
\left\Vert X\right\Vert $. Therefore,%
\begin{align*}
\limsup_{n\rightarrow\infty}R(\mu_{n};X)  & =\limsup_{n\rightarrow\infty
}\mathbb{E}\left[  s_{n}(X)\right] \\
& \leq\mathbb{E}\left[  \limsup_{n\rightarrow\infty}s_{n}(X)\right]
\leq\mathbb{E}\left[  s(X)\right]  =R(\mu;X),
\end{align*}
where the first inequality is by Fatou's lemma applied to the sequence of
nonnegative functions $\gamma\left\Vert X\right\Vert -s_{n}(X)$.
\end{proof}

\bigskip

This proves our result:

\bigskip

\begin{proof}
[Proof of Theorem \ref{th:exist}]Let $\mu_{n}=(q_{n},s_{n},b_{n}),$ for $n \ge 1$, be a sequence of
mechanisms in $\mathcal{M}_{\Gamma}$ such that $R(\mu_{n};X)\rightarrow_{n}%
\,$\textsc{Rev}$_{\Gamma}(X)$;  thus, $b_{n}\in\mathcal{B}_{\Gamma}$ by Proposition \ref{p:M_Gamma}. Proposition
\ref{l:subseq} then yields $b\in\mathcal{B}_{\Gamma}$ and a subsequence $n^{\prime
}$, which w.l.o.g.~we take to be the original sequence $n$, such that
$b_{n}$ converges pointwise to $b$. 
Next, Proposition \ref{p:sf} provides a
seller-favorable mechanism $\mu=(q,s,b)$ in $\mathcal{M}_{\Gamma}$ with
$s(x)=b^{\prime}(x;x)-b(x)$ for every $x$ in $\mathbb{R}_{+}^{k}.$ Finally,
$R(\mu;X)\geq\lim_{n}R(\mu_{n};X)=\,$\textsc{Rev}$_{\Gamma}(X)$ by Proposition
\ref{p:bn->b}, with equality since $\mu$ is in $\mathcal{M}_{\Gamma}$.
\end{proof}

\bigskip

\appendix{}

\section{Appendix: Subclasses of Mechanisms}

Does the existence result extend to subclasses of mechanisms? As we have seen, the answer
is immediately positive, by Theorem \ref{th:exist}, when the subclass corresponds to a certain compact set of allocations $\Gamma$ (as is the case, for instance, for deterministic mechanisms, where $\Gamma = \{0,1\}^k$).  However, our above proof applies to any subclass of mechanisms that is closed under the pointwise convergence
of the buyer payoff functions, i.e., provided that there is a ``limit'' mechanism $\mu$ in Proposition
\ref{p:bn->b} that is in the same subclass as the sequence of mechanisms $\mu_n$.

We provide here the result for two interesting such subclasses:  monotonic mechanisms and allocation-monotonic mechanisms (see \cite{HR15, BHN}).

\subsection{Monotonic Mechanisms}

A mechanism $\mu=(q,s,b)\ $is \emph{monotonic} if $s(y)\geq s(x)$ for every
$y\geq x$ in $\mathbb{R}_{+}^{k}.$ Let $\textsc{MonRev}_{\Gamma}(X)$
denote the maximal revenue that can be extracted from $X$ by monotonic
$\Gamma$-mechanisms.
The result is:

\begin{theorem}
\label{th:mon}For every compact set of possible allocations $\Gamma$ and every $k$-good random valuation $X$ with finite expectation there exists a monotonic
revenue-maximizing mechanism
$\mu$, i.e.,
$R(\mu;X)=
\normalfont\textsc{MonRev}_{\Gamma}(X).$
\end{theorem}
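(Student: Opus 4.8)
\textbf{Proof proposal for Theorem \ref{th:mon}.}
The plan is to reuse the proof of Theorem \ref{th:exist} verbatim, checking only that the subclass of monotonic mechanisms is closed under pointwise convergence of buyer payoff functions, in the precise sense needed for Proposition \ref{p:bn->b}. That is: if $\mu_n=(q_n,s_n,b_n)$ are monotonic mechanisms in $\mathcal{M}_\Gamma$ with $b_n\to b$ pointwise, and $\mu=(q^*,s^*,b)$ is the seller-favorable mechanism built from the limit $b$, then $\mu$ should again be monotonic. Granting this, the argument is: take $\mu_n$ monotonic with $R(\mu_n;X)\to\textsc{MonRev}_\Gamma(X)$; extract via Proposition \ref{l:subseq} a pointwise-convergent subsequence $b_n\to b\in\mathcal{B}_\Gamma$; form the seller-favorable $\mu=(q^*,s^*,b)$ via Proposition \ref{p:sf}; conclude $R(\mu;X)\ge\lim_n R(\mu_n;X)=\textsc{MonRev}_\Gamma(X)$ by Proposition \ref{p:bn->b}; and since $\mu$ is itself a monotonic mechanism, equality holds.

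So the only real work is the closure step, and it is the main obstacle. First I would record that monotonicity of a mechanism $\mu$ with seller-favorable payments $s^*(x)=b'(x;x)-b(x)$ is a property of $b$ alone, so it makes sense to call $b\in\mathcal{B}_\Gamma$ \emph{monotonic} if $x\mapsto b'(x;x)-b(x)$ is nondecreasing on $\mathbb{R}_+^k$. (One should check, using Proposition \ref{p:sf}, that if \emph{some} mechanism with payoff $b$ is monotonic then so is the seller-favorable one — the point being that $s(x)\le b'(x;x)-b(x)$ in general, with equality for $\mu^*$; here a short argument is needed since a pointwise bound does not immediately transfer monotonicity, but one can instead work directly with the equivalent condition on $b$ that I will establish next.) The key is to find a characterization of monotonicity of $b$ that passes to pointwise limits. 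A natural candidate: $s^*(x)=b'(x;x)-b(x)=\max\{g\cdot x : g\in\partial b(x)\}-b(x)$, and for any mechanism $\mu=(q,s,b)\in\mathcal{M}_\Gamma$ and any $y\ge x$ one has $s(y)\ge s(x)$ iff (after using $b(y)=\max_z[q(z)\cdot y-s(z)]$) a suitable inequality among menu entries holds. I expect the cleanest route is: monotonicity of the seller-favorable mechanism is equivalent to $b'(y;y)-b(y)\ge b'(x;x)-b(x)$ for all $y\ge x$, and this in turn — using that $b'(x;x)=\max\{g\cdot x:g\in\partial b(x)\}$ and Lipschitzness — can be rewritten in a form involving only values of $b$ and $\partial b$ on a small neighborhood. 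Then I invoke the convergence facts already quoted from Rockafellar (Theorem 24.5: $\partial b_n(x)\to\partial b(x)$, and $\limsup_n b_n'(x;x)\le b'(x;x)$, plus the matching $\liminf$ along directions realizing the limit) to push the inequality through to the limit.

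Concretely, for the closure step I would argue: fix $y\ge x$ in $\mathbb{R}_+^k$. Since each $\mu_n$ is monotonic and w.l.o.g.\ seller-favorable (replacing $\mu_n$ by its seller-favorable version with the same $b_n$ only increases revenue and, as noted, preserves monotonicity), we have $b_n'(y;y)-b_n(y)\ge b_n'(x;x)-b_n(x)$ for all $n$. Taking $\limsup$ on the left and using $b_n\to b$ pointwise together with $\limsup_n b_n'(y;y)\le b'(y;y)$ gives $b'(y;y)-b(y)\ge\limsup_n[b_n'(x;x)-b_n(x)]$; and for the right-hand side I need $\liminf_n b_n'(x;x)\ge b'(x;x)$, i.e.\ $b_n'(x;x)\to b'(x;x)$. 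This last point is the delicate one: Theorem 24.5 of \cite{Rock} gives only the upper bound in general; to get the lower bound I would use that $b'(x;x)=\max\{g\cdot x:g\in\partial b(x)\}$ is attained at some $g^*\in\partial b(x)$, and by the characterization of $\partial b(x)$ as closed convex hull of limits of gradients $\nabla b(x_m)$, $x_m\to x$, together with the convergence $\partial b_n\to\partial b$, select $g_n\in\partial b_n(x)$ with $g_n\to g^*$, whence $b_n'(x;x)\ge g_n\cdot x\to g^*\cdot x=b'(x;x)$. Combining, $b'(y;y)-b(y)\ge b'(x;x)-b(x)$, so the seller-favorable mechanism $\mu$ with payoff $b$ is monotonic, as required. (A parallel remark should note that the same scheme — isolating the subclass as a limit-closed condition on $b$ — is exactly what will be used for allocation-monotonic mechanisms in the next subsection.)
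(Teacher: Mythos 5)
Your overall reduction (run the proof of Theorem \ref{th:exist} on a sequence of monotonic mechanisms and show the subclass is closed under pointwise convergence of the $b_n$'s, with the limit taken seller-favorable) is exactly the paper's strategy. But the closure step, which you correctly identify as the only real work, contains a false claim: you assert $\liminf_n b_n'(x;x)\ge b'(x;x)$, i.e.\ $b_n'(x;x)\to b'(x;x)$, at every $x$. This fails whenever $b$ is not differentiable at $x$. Take $k=1$, $\Gamma=[0,1]$, and let $b_n$ be smooth convex functions with $b_n(0)=0$ and $b_n'\in(0,1)$ converging pointwise to $b(t)=\max(0,t-1)$ (a smoothed hinge, e.g.\ built from $\tfrac12\bigl((t-1)+\sqrt{(t-1)^2+1/n}\bigr)$ and shifted to vanish at $0$); each $\mu_n$ is monotonic since $s_n'(t)=b_n''(t)\,t\ge0$. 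At $x=1$ one has $b_n'(1;1)\to 1/2$ while $b'(1;1)=\max\{g\cdot 1:g\in[0,1]\}=1$, so $s_n(1)\to 1/2$ while $s(1)=1$. Your justification conflates two different facts: the description of $\partial b(x)$ as the closed convex hull of limits of gradients $\nabla b(x_m)$ concerns nearby points $x_m\to x$ of the \emph{single} function $b$, while Theorem 24.5 gives only the one-sided inclusion $\partial b_n(x)\subset\partial b(x)+\varepsilon B$; neither lets you select $g_n\in\partial b_n(x)$ converging to a prescribed maximizer $g^*\in\partial b(x)$. Consequently your chain only yields $s(y)\ge\limsup_n s_n(x)$, which can be strictly smaller than $s(x)$.

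The paper repairs exactly this point with a two-case argument that your proposal is missing. First, at points $x\in D$ where $b$ is differentiable, $\partial b(x)$ is a singleton, Theorem 24.5 does give $q_n(x)\to\nabla b(x)$ for \emph{any} choice $q_n(x)\in\partial b_n(x)$ (so no ``w.l.o.g.\ seller-favorable $\mu_n$'' is needed), hence $s_n(x)\to s(x)$ and the inequality $s(y)\ge\limsup_n s_n(y)\ge\lim_n s_n(x)=s(x)$ goes through. Second, for general $x$, one approximates by points $x^m\in D$ with $x^m\ge x$ converging to $x$ \emph{from the direction} $x$, so that $\nabla b(x^m)$ accumulates in $\partial b(x)_x$ and therefore $s(x^m)\to b'(x;x)-b(x)=s(x)$; setting $y^m:=y+x^m-x\ge x^m$ and using case (i) plus upper semicontinuity of $s$ yields $s(y)\ge s(x)$. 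Your deferred lemma (that the seller-favorable version of a monotonic mechanism is monotonic) would require essentially this same approximation argument, so it cannot be waved away either. The directional-approximation device is the missing idea; without it the limit passage at nondifferentiability points of $b$ does not close.
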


Again, the result applies to the additive-valuation setup as well as the unit-demand setup, for general (randomized) mechanisms, and also for deterministic mechanisms. The proof, as in Section \ref{s:proof}, uses the following additional result.

\begin{proposition}
\label{p:bn->b monot}Let $\mu_{n}=(q_{n},s_{n},b_{n}),$ for $n=1,2,...$, 
and $\mu=(q,s,b)$ be 
$\Gamma$-mechanisms in $\mathcal{M}_{\Gamma}$. If all the $\mu_{n}$ are
monotonic, $b_{n}$ converges pointwise to $b$, and $\mu$ is seller favorable, then $\mu$ is
monotonic as well.
\end{proposition}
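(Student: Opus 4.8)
The plan is to show that the seller-favorable payment function $s(x) = b'(x;x) - b(x)$ is monotonic, i.e., $s(y) \ge s(x)$ whenever $y \ge x$ in $\mathbb{R}_+^k$. The key observation is that monotonicity of the payment function has a clean characterization in terms of the buyer payoff function $b$ alone, one that is preserved under pointwise limits. So the first step is to identify such a characterization. A natural candidate: for seller-favorable mechanisms, $s(y) = \max\{g \cdot y : g \in \partial b(y)\} - b(y) = b'(y;y) - b(y)$, and one wants to compare this across comparable points.

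First I would recall the known fact (from \cite{HR15}) that $\mu_n$ monotonic together with $\mu_n$ seller-favorable (or more to the point, that some monotonic mechanism with payoff $b_n$ exists) translates into a pointwise inequality on $b_n$. The cleanest route: I claim that a mechanism with buyer payoff $b$ admits a monotonic version if and only if the seller-favorable payment $b'(\cdot;\cdot) - b(\cdot)$ is itself monotonic; and more usefully, that monotonicity of \emph{any} $(q,s,b) \in \mathcal{M}_\Gamma$ forces, for all $y \ge x$,
\[
b'(y;y) - b(y) \;\ge\; s(y) \;\ge\; s(x),
\]
but this only bounds $s(x)$ from below, not $b'(x;x)-b(x)$. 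So instead I would argue directly at the level of $b_n$: since $\mu_n$ is monotonic and $s_n(x) \le b_n'(x;x) - b_n(x)$ with equality achievable, the existence of a monotonic $s_n$ with payoff $b_n$ gives the inequality $b_n'(y;y) - b_n(y) \ge s_n(x)$ for $y \ge x$ and \emph{all} admissible $s_n(x)$; taking $s_n(x)$ as large as possible — but wait, the seller-favorable $s_n$ need not be the monotonic one. The honest fact I would use is: if $\mu_n$ is monotonic then for every $x$, $s_n(x) \le \inf_{y \ge x}[b_n'(y;y) - b_n(y)]$ is false in general; rather, monotonicity of $\mu_n$ plus $s_n(y) \le b_n'(y;y)-b_n(y)$ does \emph{not} immediately give a statement about $b_n$ alone.

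So the main obstacle — and the step I would think hardest about — is extracting from ``$\mu_n$ admits a monotonic representative'' a property of $b_n$ that (i) passes to the pointwise limit $b$ and (ii) implies the seller-favorable $\mu$ is monotonic. The right statement, which I believe is in \cite{HR15}, is: $b$ admits a monotonic mechanism iff $b'(x;x) - b(x)$ is nondecreasing in $x$; equivalently the seller-favorable mechanism is monotonic exactly when a monotonic one exists. Granting this, the plan is: (1) from each monotonic $\mu_n \in \mathcal{M}_\Gamma$ deduce that $\sigma_n(x) := b_n'(x;x) - b_n(x)$ is nondecreasing on $\mathbb{R}_+^k$; (2) show that $\sigma_n \to \sigma := b'(\cdot;\cdot) - b(\cdot)$ pointwise — actually only $\limsup_n \sigma_n(x) \le \sigma(x)$ holds directly from Theorem 24.5 in \cite{Rock}, while $\liminf_n \sigma_n(x) \ge \sigma(x)$ needs the lower semicontinuity of the directional derivative map, which also follows from the convergence $\partial b_n(x) \to \partial b(x)$ and the fact that $b_n(x) \to b(x)$; (3) conclude $\sigma$ is nondecreasing as a pointwise limit of nondecreasing functions; (4) since $\mu = (q,s,b)$ is the seller-favorable mechanism, $s(x) = \sigma(x)$, which is nondecreasing, so $\mu$ is monotonic. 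The delicate point in step (2) is establishing the two-sided convergence $b_n'(x;x) \to b'(x;x)$ rather than just a one-sided $\limsup$; I would handle this by noting that $b_n'(x;x) = \max\{g \cdot x : g \in \partial b_n(x)\}$ and using that $\partial b_n(x)$ converges to $\partial b(x)$ in the appropriate set sense (Theorem 24.5 of \cite{Rock}) together with the compactness giving both upper and lower limits of the max. Alternatively, and more robustly, one can use the identity $b'_n(x;x) = \lim_{\delta \to 0^+}(b_n(x+\delta x) - b_n(x))/\delta$, approximate uniformly on the segment $[x, (1+\delta)x]$ using $\gamma$-Lipschitzness and the standard fact that pointwise convergence of convex functions is uniform on compacta (Theorem 10.8 in \cite{Rock}), and exchange limits.
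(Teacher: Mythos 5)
There is a genuine gap in your step (2): the two-sided convergence $b_n'(x;x)\rightarrow b'(x;x)$ is false in general. Theorem 24.5 of \cite{Rock} gives only the one-sided statement that, for large $n$, $\partial b_n(x)$ is contained in an $\varepsilon$-neighborhood of $\partial b(x)$; hence only $\limsup_n b_n'(x;x)\le b'(x;x)$. The sets $\partial b_n(x)$ need \emph{not} converge to $\partial b(x)$, and your claimed $\liminf_n\sigma_n(x)\ge\sigma(x)$ fails. A one-good example: $b_n(t)=\max(0,\,t-1-1/n)$ (the price-$(1+1/n)$ mechanisms, all monotonic) converges pointwise to $b(t)=\max(0,\,t-1)$, yet at $t=1$ we have $\partial b_n(1)=\{0\}$ while $\partial b(1)=[0,1]$, so $\sigma_n(1)=0$ for every $n$ but $\sigma(1)=1$. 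Your fallback via uniform convergence on compacta breaks for the same reason: the interchange of $\lim_n$ with $\lim_{\delta\rightarrow 0^+}$ is precisely what fails (the $\delta$-limit is not uniform in $n$). Consequently the chain $\sigma(y)\ge\limsup_n\sigma_n(y)\ge\limsup_n\sigma_n(x)\ge\sigma(x)$ loses its last link, and the argument proves nothing at points $x$ where $b$ is not differentiable.

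The convergence \emph{is} two-sided exactly where $\partial b(x)$ is a singleton: there Theorem 24.5 forces every choice $q_n(x)\in\partial b_n(x)$ to converge to $\nabla b(x)$, so $s_n(x)\rightarrow s(x)$. The paper's proof exploits this by splitting into two cases: for $x\in D$ it combines $s_n(x)\rightarrow s(x)$ with $\limsup_n s_n(y)\le s(y)$ and $s_n(y)\ge s_n(x)$; for general $x$ it approximates by points $x^m\in D$ with $x^m\ge x$ converging to $x$ from the direction $x$ (so that $q(x^m)=\nabla b(x^m)$ tends to $\partial b(x)_x$ and $s(x^m)\rightarrow s(x)$), shifts $y$ to $y^m=y+x^m-x\ge x^m$, and uses the upper semicontinuity of $s$. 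You would need to graft this two-case structure onto your scheme to close the gap. Separately, your step (1) --- that the existence of \emph{some} monotonic mechanism with payoff $b_n$ forces $\sigma_n$ to be nondecreasing --- is true but is asserted on faith and itself requires the same density-plus-upper-semicontinuity argument; it becomes unnecessary once one argues directly on $s_n$ and $s$ as the paper does.
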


\begin{proof}
Let $y\geq x$ be two points in $\mathbb{R}_{+}^{k};$ we need to show that
$s(y)\geq s(x)$. 

(i) Assume first that $x$ is in $D$ (the dense set of points where $b$ is differentiable), and so $q(x)=\nabla b(x).$ By Theorem
24.5 in \cite{Rock}, we get $q_{n}(x)\rightarrow_{n}q(x),$ and so
$s_{n}(x)=q_{n}(x)\cdot x-b_{n}(x)\rightarrow_{n}q(x)\cdot x-b(x)=s(x)$. Now $s_{n}(y)\geq s_{n}(x)$ for every $n$ (since
the $\mu_{n}$ are monotonic), and so, by
Proposition \ref{l:subseq}, it follows that $s(y)\geq\limsup_{n}s_{n}(y)\geq\lim_{n}%
s_{n}(x)=s(x)$.

(ii) For a general $x \in \mathbb{R}^k_+$ (not necessarily in $D$), we proceed as follows. Let
$x^{m}$ be a sequence of points in $D$ such that $x^m \ge x$ and $x^{m}\rightarrow_{m}x$ from
the direction $x$; then $q(x^{m})=\nabla b(x^{m})\rightarrow_{m}\partial
b(x)_{x}$. 
Since $g\cdot x=b^{\prime}(x;x)$ for
every $g\in\partial b(x)_{x},$ it follows that $s(x^{m})=q(x^{m})\cdot
x^{m}-b(x^{m})\rightarrow_{m}b^{\prime}(x;x)-b(x)=s(x).$ Let $y^{m}:=y+x^{m}-x$; then $y^{m}\rightarrow_{m}y$ and $y^{m}\geq x^{m}\in D,$ and so
$s(y^{m})\geq s(x^{m})$ by (i) above. The function $s$ is upper
semicontinuous (because $b^{\prime}$ is upper semicontinuous and $b$ is continuous; see
Theorem 10.1 and Corollary 24.5.1 in \cite{Rock}), and so $s(y)\geq\limsup_{m}s(y^{m})\geq\lim_{m}s(x^{m})=s(x).$

Thus $s(y)\geq s(x)$ in both cases, completing the proof.
\end{proof}

\bigskip

We note that the limit $\mu$ need not be monotonic when it is not seller favorable (just break the tie at some point in the ``wrong way'').

\subsection{Allocation-Monotonic Mechanisms}

A mechanism $\mu=(q,s,b)$ is \emph{allocation monotonic} if $q(y)\geq
q(x)$ for every $y\geq x$ in $\mathbb{R}_{+}^{k}.$ Let $\textsc{AMonRev}_{\Gamma}(X)$ denote the maximal revenue that can be extracted from $X$ by
allocation-monotonic $\Gamma$-mechanisms.

\begin{theorem}
\label{th:amon}For every compact set of possible allocations $\Gamma$ and every $k$-good random valuation $X$ with finite expectation there exists an allocation-monotonic
revenue-maximizing mechanism
$\mu$, i.e.,
$R(\mu;X)=
\normalfont\textsc{AMonRev}_{\Gamma}(X).$
\end{theorem}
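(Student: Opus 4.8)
\textbf{Proof plan for Theorem \ref{th:amon}.}

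The plan is to mimic the proof scheme used for Theorems \ref{th:exist} and \ref{th:mon}: take a revenue-maximizing sequence $\mu_n=(q_n,s_n,b_n)$ in $\mathcal{M}_\Gamma$ of allocation-monotonic mechanisms, pass (via Proposition \ref{l:subseq}) to a subsequence along which $b_n$ converges pointwise to some $b\in\mathcal{B}_\Gamma$, invoke Proposition \ref{p:sf} to get a seller-favorable mechanism $\mu=(q,s,b)$ in $\mathcal{M}_\Gamma$, and conclude by Proposition \ref{p:bn->b} that $R(\mu;X)=\textsc{AMonRev}_\Gamma(X)$. The only new ingredient needed is the analogue of Proposition \ref{p:bn->b monot}: that the limit seller-favorable mechanism $\mu$ is itself allocation monotonic, i.e.\ $q(y)\ge q(x)$ whenever $y\ge x$.

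To prove this, I would first handle the case where both $x$ and $y$ lie in $D$ (where $b$ is differentiable). By Theorem 24.5 in \cite{Rock}, pointwise convergence $b_n\to b$ forces $\nabla b_n(x)\to\nabla b(x)$ and $\nabla b_n(y)\to\nabla b(y)$, so $q_n(x)=\nabla b_n(x)\to q(x)$ and $q_n(y)=\nabla b_n(y)\to q(y)$; since $q_n(y)\ge q_n(x)$ for every $n$ by allocation monotonicity of $\mu_n$, the coordinatewise inequality passes to the limit, giving $q(y)\ge q(x)$. For general $x\le y$ in $\mathbb{R}^k_+$, I would choose—exactly as in the proof of Proposition \ref{p:bn->b monot}—a sequence $x^m\in D$ with $x^m\ge x$ and $x^m\to x$ from the direction $x$, so that (using the facts recalled in Section \ref{s:model} about subgradients being maximal in a limiting direction) $q(x^m)=\nabla b(x^m)$ converges to a point of $\partial b(x)_x$; and since $\mu$ is seller favorable, $q(x)$ is precisely such a maximizer, so the limit equals $q(x)$ (or, more carefully, $q(x^m)\cdot x\to q(x)\cdot x=b'(x;x)$ and any limit point of $q(x^m)$ lies in $\partial b(x)_x\cap\Gamma$, which for the seller-favorable choice contains $q(x)$). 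Setting $y^m:=y+x^m-x$, we have $y^m\to y$ and $y^m\ge x^m$ with $x^m\in D$; after perturbing $y^m$ slightly within $D$ if necessary—take a point of $D$ within $1/m$ of $y^m$ and still $\ge x^m$, which is possible since $D$ is dense and $y^m\ge x^m$ coordinatewise with room to spare once $x^m>x$ strictly in the relevant coordinates—case (i) gives $q(y^m)\ge q(x^m)$. Then I would take a limit: any limit point of $q(y^m)$ lies in $\partial b(y)\cap\Gamma$ and, by the limiting-direction argument, in $\partial b(y)_x$; comparing with $q(x)=\lim q(x^m)\in\partial b(x)_x$ and using that $\mu$ is seller favorable at $y$ as well, we get $q(y)\ge q(x)$. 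As in the remark after Proposition \ref{p:bn->b monot}, the seller-favorable tie-breaking is essential.

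The main obstacle I anticipate is the last limiting step: unlike the payment $s$, which is upper semicontinuous and for which a one-sided $\limsup$ argument sufficed in Proposition \ref{p:bn->b monot}, here I need a genuine coordinatewise inequality between the \emph{vectors} $q(y)$ and $q(x)$, and the subgradient multifunction $x\mapsto\partial b(x)$ is only upper semicontinuous, not continuous, at non-differentiability points. The care is in arranging that the approximating points $x^m,y^m$ lie in $D$, approach $x,y$ from the direction $x$ (forcing any limit point of gradients into $\partial b(\cdot)_x$), and that the seller-favorable selection $q$ indeed picks out the maximizer in direction $x$—this is guaranteed by the construction in Proposition \ref{p:sf}, where $q^\ast(x)\in\partial b(x)_x\cap\Gamma$. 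Once all limit points of $q(x^m)$ are pinned to $q(x)$ and all limit points of $q(y^m)$ lie in $\partial b(y)_x\cap\Gamma$, the seller-favorable choice of $q(y)$ in that same set, together with the passage to the limit of $q(y^m)\ge q(x^m)$, closes the argument; a subsequence may be needed to ensure $q(y^m)$ converges. One should also note that the perturbation of $y^m$ into $D$ must keep $y^m\ge x^m$, which is routine once $x^m$ is chosen with all coordinates strictly positive (possible since $x^m\to x$ from direction $x$ means $x^m = x + (1/m)x + o(1/m)$ has positive coordinates wherever $x$ does, and a further perturbation handles zero coordinates of $x$).
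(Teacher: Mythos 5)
Your overall scheme (maximizing sequence, subsequential pointwise limit of the $b_n$ via Proposition \ref{l:subseq}, upper semicontinuity of revenue via Proposition \ref{p:bn->b}) matches the paper, and your case (i) for $x,y\in D$ is fine. The gap is in the final limiting step for general $x\le y$ --- exactly the step you flag as the main obstacle. What your argument actually delivers there is the existence of \emph{some} limit point $g_x\in\partial b(x)_x\cap\Gamma$ of the sequence $\nabla b(x^m)$ and \emph{some} limit point $g_y\in\partial b(y)_x\cap\Gamma$ of $\nabla b(y^m)$ with $g_y\ge g_x$. This does not give $q(y)\ge q(x)$ for the seller-favorable mechanism of Proposition \ref{p:sf}: (a) seller-favorability only pins down the scalar $q(x)\cdot x=b'(x;x)$, so $q(x)$ is an arbitrary element of $\partial b(x)_x\cap\Gamma$ and need not coincide with $g_x$ when that set is not a singleton; (b) worse, $q(y)$ is selected from $\partial b(y)_y\cap\Gamma$ (maximizers in direction $y$), while $g_y$ lies in $\partial b(y)_x\cap\Gamma$ (maximizers in direction $x$, since $y^m\to y$ from direction $x$); these sets can be disjoint, and nothing forces the selected $q(y)$ to dominate $q(x)$. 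The step ``using that $\mu$ is seller favorable at $y$ as well, we get $q(y)\ge q(x)$'' is a non sequitur. More fundamentally, producing for each \emph{pair} $x\le y$ a pair of dominating subgradients is not the same as exhibiting a single coherent selection $q$ that is simultaneously in $\Gamma$, seller favorable, and coordinatewise monotone across all pairs.

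The paper closes this gap by a different route: it invokes the characterization (from the cited \cite{BHN}) that a \emph{tie-favorable} mechanism (seller favorable \emph{and} buyer favorable --- note the strengthened hypothesis in Proposition \ref{p:bn->b amon} relative to Proposition \ref{p:bn->b monot}) is allocation monotonic if and only if $b$ is supermodular on $\mathbb{R}_+^k$. Supermodularity is a property of $b$ alone expressed by pointwise inequalities, so it passes trivially to the pointwise limit; for a supermodular limit $b$ one then takes at each $x$ the coordinatewise-maximal subgradient $q^*(x)$, shows it lies in $\Gamma$ (by approaching $x$ from the direction $(1,\dots,1)$ through $D$), and observes that this canonical selection is the unique tie-favorable mechanism and is allocation monotonic. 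To repair your proof you would need either this supermodularity argument or some other device that yields one consistent monotone selection rather than pairwise dominating subgradients.
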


For the proof we use:

\begin{proposition}
\label{p:bn->b amon}Let $\mu_{n}=(q_{n},s_{n},b_{n})$, for $n=1,2,...$, 
and 
$\mu=(q,s,b)$ be 
$\Gamma$-mechanisms in $\mathcal{M}_{\Gamma}$. If all the $\mu_{n}$ are
allocation monotonic, $b_{n}$ converges pointwise to $b$, and $\mu$ is tie favorable (i.e., seller
favorable as well as buyer favorable), then $\mu$ is allocation monotonic as well.
\end{proposition}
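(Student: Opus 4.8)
The plan is to mimic the proof of Proposition \ref{p:bn->b monot}, replacing statements about the payments $s$ with statements about the allocations $q$, and to exploit the extra hypothesis that $\mu$ is \emph{buyer} favorable (and not merely seller favorable). Fix $y \ge x$ in $\mathbb{R}_+^k$; we must show $q(y) \ge q(x)$ coordinatewise. The natural first case is when \emph{both} $x$ and $y$ lie in the dense set $D$ where $b$ is differentiable. There, $q(x) = \nabla b(x)$, $q(y) = \nabla b(y)$, and by Theorem 24.5 in \cite{Rock} the pointwise convergence $b_n \to b$ at the differentiability points yields $q_n(x) = \nabla b_n(x) \to \nabla b(x) = q(x)$ and likewise $q_n(y) \to q(y)$. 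Since each $\mu_n$ is allocation monotonic, $q_n(y) \ge q_n(x)$ for every $n$; the inequality is preserved in the limit, giving $q(y) \ge q(x)$.

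For the general case we approximate $x$ (and $y$) from within $D$, but here the direction of approach must be chosen carefully, which is where buyer- and seller-favorability each play a role. The point is that a tie-favorable mechanism picks, at each $x$, an allocation $q(x) \in \partial b(x)$ that is simultaneously maximal and minimal in the relevant directions, so that $q(x)$ is pinned down as the appropriate limit of gradients. Concretely, I would take sequences $x^m, y^m \in D$ with $x^m \to x$ and $y^m \to y$, chosen so that $y^m \ge x^m$ (e.g.\ set $y^m := y + (x^m - x)$ after first arranging $x^m \ge x$, exactly as in part (ii) of the previous proof), and so that $\nabla b(x^m) \to q(x)$ and $\nabla b(y^m) \to q(y)$. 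By the first case, $\nabla b(y^m) = q(y^m) \ge q(x^m) = \nabla b(x^m)$ for each $m$, and passing to the limit gives $q(y) \ge q(x)$.

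The main obstacle — and the reason tie-favorability (rather than just seller-favorability) is needed — is justifying that the limits $\nabla b(x^m) \to q(x)$ and $\nabla b(y^m) \to q(y)$ can be arranged \emph{consistently with the monotone pairing} $y^m \ge x^m$. The set $\partial b(x)$ can be a nontrivial polytope, and different directions of approach to $x$ select different subgradients; we need the specific subgradient $q(x)$ that $\mu$ assigns to be reachable by gradients $\nabla b(x^m)$ along points $x^m \ge x$ (and similarly for $y$), while keeping $y^m \ge x^m$. Seller-favorability controls the behavior in the direction $x$ (maximality in the direction $x$ forces $q(x) \cdot x = b'(x;x)$), and buyer-favorability controls it in the opposite/complementary sense; together they should force $q(x)$ to be the limit of $\nabla b(x^m)$ for $x^m$ approaching $x$ from the direction of the positive orthant, which is precisely the regime in which the monotone coupling $y^m := y + (x^m-x) \ge x^m$ survives. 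I would carry out this selection step first, then invoke the subgradient convergence theorems (Theorems 24.5 and 25.6 in \cite{Rock}) to pass to the limit, and finally note — as the paper does for the monotonic case — that without tie-favorability the tie could be broken in a way that destroys allocation monotonicity.
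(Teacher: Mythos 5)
There is a genuine gap, and it sits exactly where you flag it yourself. Your first case (both $x$ and $y$ in $D$) is correct, and it is in fact the substantive part of what pointwise convergence buys you: it shows that $\nabla b(y)\ge\nabla b(x)$ for $y\ge x$ in $D$, i.e., that the gradient of $b$ is coordinatewise monotone on $D$ --- which is precisely the supermodularity of $b$. But the second case is not a proof: the assertion that tie-favorability ``should force $q(x)$ to be the limit of $\nabla b(x^m)$ for $x^m$ approaching $x$ from the direction of the positive orthant'' is announced as a step you ``would carry out first,'' and that step is the entire difficulty. For a general convex $b$ there is no single subgradient selected by ``approaching from the positive orthant'': different directions of approach within the orthant select different faces $\partial b(x)_y$ of the (possibly nontrivial) polytope $\partial b(x)$, and seller-favorability only pins down maximality in the one direction $x$, which does not determine $q(x)$ and is not obviously compatible with the coupling $y^m=y+x^m-x$. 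Moreover that coupling need not keep $y^m$ in $D$ (fixable, since $D$ has full measure, but unaddressed). What actually rescues the argument is the supermodularity of $b$ that your Case (i) delivers: for a supermodular convex function, $\partial b(x)$ has a coordinatewise-\emph{maximal} element $q^*(x)$, this element is the one the tie-favorable rule selects, it is reached as the limit of gradients from the direction $(1,1,\dots,1)$, and the map $x\mapsto q^*(x)$ is monotone. None of this is established in your sketch.

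The paper takes a different and shorter route that packages exactly this missing content into a cited result: by \cite{BHN}, a tie-favorable mechanism is allocation monotonic if and only if its buyer payoff function $b$ is supermodular on $\mathbb{R}_+^k$. Each $b_n$ is therefore supermodular, the supermodular inequalities are plain pointwise inequalities and so pass to the pointwise limit $b$, and the reverse direction of the characterization (plus the observation that the coordinatewise-maximal subgradient lies in $\Gamma$, seen by approaching from the direction $(1,\dots,1)$) gives that the tie-favorable $\mu$ is allocation monotonic. If you want to avoid citing that characterization, you would have to prove both of its directions yourself --- your Case (i) essentially gives one of them, but the selection/monotonicity argument for $q^*$ is the other, and it is the part your proposal leaves open.
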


\begin{proof}
In \cite{BHN} (Theorem C, Proposition 4.1, and Appendix A-6), it is shown that, for tie-favorable mechanisms, $\mu$ is
allocation monotonic if and only if $b$ is a supermodular function on
$\mathbb{R}_{+}^{k}.$ The supermodular inequalities are clearly preserved when
taking limits: if $b_{n}\rightarrow b$ pointwise and $b_{n}$ is supermodular
for each $n$, then so is $b$. For supermodular functions, at every point there is a coordinatewise-maximal subgradient $q^*(x)$, which must be in $\Gamma$ (this is seen by taking points $x^m$ in $D$ that converge to $x$ from the direction $(1,1,...,1)$),
and so the unique tie-favorable mechanism for this $b$, which uses $q^*$, is allocation monotonic and in $\mathcal{M}_\Gamma$.
\end{proof}

\bibliographystyle{alpha}
\bibliography{bib}

\end{document}